\newtheorem{theorem}{Theorem}
\newtheorem{definition}{Definition}
\newtheorem{lemma}{Lemma}
\newtheorem{corollary}{Corollary}
\newtheorem{example}{Example}
\newcommand{\set}[1]{\mathcal{#1}}
\newcommand{\defined}{\triangleq}
\newcommand{\Real}{{\mathbb{R}}}
\newcommand{\graph}{\set{G}}
\newcommand{\nodes}{\set{V}}
\newcommand{\edges}{\set{E}}
\newcommand{\N}{\set{N}}
\DeclareMathOperator{\indep}{\perp}
\newcommand{\X}{\set{X}}
\newcommand{\C}{\set{C}}
\newcommand{\R}{\set{R}}
\newcommand{\A}{\set{A}}\newcommand{\B}{\set{B}}
\newcommand{\GammaIn}{\Gamma^\text{In}}
\newcommand{\eht}[2]{#1 \to #2}
\newcommand{\tail}[1]{\mathrm{tail}(#1)}
\newcommand{\head}[1]{\mathrm{head}(#1)}
\newcommand{\sessions}{\set{S}}
\newcommand{\parent}{\pi}
\DeclareMathOperator{\An}{An}
\title{Network Coding Capacity: A Functional Dependence Bound}
\author{\authorblockN{Satyajit Thakor, Alex Grant and Terence Chan}
\authorblockA{Institute for Telecommunications Research\\
University of South Australia}
{satyajitsinh.thakor@postgrads.unisa.edu.au}}
\begin{document}
\maketitle

\begin{abstract}
  Explicit characterization and computation of the multi-source
  network coding capacity region (or even bounds) is long standing
  open problem. In fact, finding the capacity region requires
  determination of the set of all entropic vectors $\Gamma^{*}$, which
  is known to be an extremely hard problem. On the other hand,
  calculating the explicitly known linear programming bound is very
  hard in practice due to an exponential growth in complexity as a
  function of network size. We give a new, easily computable outer
  bound, based on characterization of all functional dependencies in
  networks.  We also show that the proposed bound is tighter than some
  known bounds.
\end{abstract}

\section{Introduction}

The network coding approach introduced in~\cite{AhlCai00,LiYeu03}
generalizes routing by allowing intermediate nodes to forward packets
that are coded combinations of all received data packets. This yields
many benefits that are by now well documented in the literature
\cite{YeuLi06,FraSol07,FraSol08,HoLun08}.  One fundamental open
problem is to characterize the capacity region and the classes of
codes that achieve capacity. The single session multicast problem
is well understood. In this case, the capacity region is characterized by
max-flow/min-cut bounds and linear network codes maximize throughput
\cite{LiYeu03}.

Significant complications arise in more general scenarios, involving
more than one session.  Linear network codes are not sufficient for
the multi-source problem~\cite{DouFre05,ChaGra08}.  Furthermore, a
computable characterization of the capacity region is still
unknown. One approach is to bound the capacity region by the
intersection of a set of hyperplanes  (specified by the network
topology and sink demands) and the set of entropy functions $\Gamma^*$
(inner bound), or its closure $\bar{\Gamma}^*$ (outer
bound)~\cite{Yeu02,SonYeu03,YeuLi06}.  An exact expression for the
capacity region does exist, again in terms of
$\Gamma^*$~\cite{YanYeu07}.  Unfortunately, this expression, or even
the bounds~\cite{Yeu02,SonYeu03,YeuLi06} cannot be computed in
practice, due to the lack of an explicit characterization of the set
of entropy functions for more than three random variables. In fact, it
is now known that $\Gamma^*$ cannot be described as the intersection
of finitely many half-spaces \cite{Mat07}. The difficulties arising
from the structure of $\Gamma^*$ are not simply an artifact of the way
the capacity region and bounds are written. In fact it has been shown
that the problem of determining the capacity region for multi-source
network coding is completely equivalent to characterization of
$\Gamma^*$ \cite{ChaGra08}.

One way to resolve this difficulty is via relaxation of the bound,
replacing the set of entropy functions with the set of polymatroids
$\Gamma$ (which has a finite characterization). In practice however,
the number of variables and constraints increase exponentially with
the number of links in the network, and this prevents practical
computation for any meaningful case of interest.

In this paper, we provide an easily computable relaxation of the LP
bound. The main idea is to find sets of edges which are determined by
the source constraints and sink demands such that the total capacity of these sets
bounds the total throughput. The resulting bound is tighter than the
network sharing bound \cite{YanYan06} and the bounds based on
information dominance \cite{HarKle06}.

Section~\ref{sec:background} provides some background on
pseudo-variables and pseudo entropy functions (which generalize
entropy functions) \cite{ChaGra08}. These pseudo variables are used to
describe a family of linear programming bounds on the capacity region
for network coding.  In Section~\ref{sec:FDG} we give an abstract
definition of a functional dependence graph, which expresses a set of
local dependencies between pseudo variables (in fact a set of
constraints on the pseudo entropy). Our definition extends that
introduced by Kramer \cite{Kra98} to accommodate
cycles. This section also provides the main technical
ingredients for our new bound. In particular, we describe a test for
functional dependence, and give a basic result relating local and
global dependence.  The main result is presented in
Section~\ref{sec:bound}.

\emph{Notation}: Sets will be denoted with calligraphic typeface,
e.g. $\X$. Set complement is denoted by the superscript $\X^c$ (where
the universal set will be clear from context). Set subscripts identify the
set of objects indexed by the subscript: $X_\A=\{X_a, a\in\A\}$. The
power set $2^\X=\{\A, \A\subseteq\X\}$ is the collection of all
subsets of $\X$. Where no confusion will arise, set union will be
denoted by juxtaposition, $\A\cup\B=\A\B$, and singletons will be
written without braces.

\section{Background}\label{sec:background}
\subsection{Pseudo Variables}
We give a brief revision of the concept of pseudo-variables,
introduced in \cite{ChaGra08}.  Let $\N=\{1,2,\dots,N\}$ be a finite
set, and let $X_\N=\{X_1,X_2,\dots,X_N\}$ be a ground set associated
with a   real-valued function $g:2^{X_\N}\mapsto\Real$ defined on
subsets of $X_\N$, with $g(\emptyset)=0$. We refer to the elements of
$X_\N$ as \emph{pseudo-variables} and the function $g$ as a
\emph{pseudo-entropy} function. Pseudo-variables and pseudo-entropy
generalize the familiar concepts of random variables and
entropy. Pseudo-variables do not necessarily take values, and there
may be no associated joint probability distribution. A pseudo-entropy
function may assign values to subsets of $X_\N$ in a way that is not
consistent with any distribution on a set of $N$ random variables.  A
pseudo-entropy function $g$ can be viewed as a point in a $2^N$
dimensional Euclidean space, where each coordinate of the space is
indexed by a subset of $X_\N$.

A function $g$ is called \emph{polymatroidal} if it satisfies the
polymatroid axioms.
\begin{align}
    g(\emptyset) &= 0 \label{poly:zero} \\
    g(X_\A) &\ge g(X_\B),\quad\text{if}\ \B\subseteq \A
    &&\text{non-decreasing} \label{poly:nondec}\\
    g(X_\A) + g(X_\B) &\ge g({X_{\A\cup\B}}) +
  g({X_{\A\cap\B}}) && \text{submodular} \label{poly:submod}
  \end{align}
  It is called \emph{Ingletonian} if it satisfies Ingleton's
  inequalities (note that Ingletonian $g$ are also polymatroids)
  \cite{GuiCha08}. A function $g$ is \emph{entropic} if it corresponds
  to a valid assignment of joint entropies on $N$ random variables,
  i.e. there exists a joint distribution on $N$ discrete finite random
  variables $Y_1,\dots,Y_N$ with $g(X_\A)=H(Y_\A),
  \A\subseteq\N$. Finally, $g$ is \emph{almost entropic} if there
  exists a sequence of entropic functions $h^{(k)}$ such that
  $\lim_{k\to\infty} h^{(k)} = g$.  Let
  $\Gamma^*\subset\bar{\Gamma}^*\subset\GammaIn\subset\Gamma$
  respectively denote the sets of all entropic, almost entropic,
  Ingletonian and polymatroidal, functions.

  Both $\Gamma$ and $\GammaIn$ are polyhedra. They can be expressed as
  the intersection of a finite number of half-spaces in $\Real^N$. In
  particular, every $g\in\Gamma$ satisfies,
  \eqref{poly:zero}-\eqref{poly:submod}, which can be expressed
  minimally in terms of
  $$N + \binom{N}{2}2^{N-2}$$ linear inequalities involving $2^N-1$
  variables \cite{Yeu02}. Each $g\in\GammaIn$ satisfies an additional
  $$\frac{1}{4}6^N-5^N+\frac{3}{2}4^N - 3^N+\frac{1}{4}2^N$$ linear
  inequalities \cite{GuiCha08}.


\begin{definition}\label{def:function}
  Let $\A,\B\subseteq\X$ be subsets of a set of pseudo-variables $\X$
  with pseudo-entropy $g$. Define\footnote{Note that this yields the
    chain rule for pseudo-entropies to be true by definition.}
  \begin{equation}
    g\left(\B\mid\A\right) \defined g\left(\A
      \B\right) - g(\A).\label{eq:chainrule}
 \end{equation}
  A pseudo-variable $X\in\X$ is said to be
  a \emph{function} of a set of pseudo-variables $\A\subseteq\X$ if
  $g\left(X \mid \A\right) = 0$.
\end{definition}
\begin{definition}\label{def:independent}
  Two subsets of pseudo-variables $\A$ and $\B$ are called
  \emph{independent} if $g\left(\A \B\right) = g(\A ) + g(\B)$,
  denoted by $\A \indep \B$.
\end{definition}

\subsection{Network Coding and Capacity Bounds}


Let the directed acyclic graph $\graph = (\nodes, \edges)$ serve as a
simplified model of a communication network with error-free
point-to-point communication links. Edges $e\in\edges$ have capacity
$C_e>0$.  For edges $e,f\in\edges$, write $\eht{f}{e}$ as shorthand
for $\head{f}=\tail{e}$.  Similarly, for an edge $f\in\edges$ and a
node $u\in\nodes$, the notations $\eht{f}{u}$ and $u\rightarrow f$
respectively denote $\head{f}=u$ and $\tail{f}=u$.

Let $\sessions$ be an index set for a number of multicast
sessions, and let $\{Y_{s}: s \in \sessions\}$ be the set of
source variables. These sources are available at the nodes
identified by the mapping $a : \sessions \mapsto \nodes$. Each
source may be demanded by multiple sink nodes, identified by the
mapping $b : \sessions \mapsto 2^{\nodes}$. Each edge $e\in\edges$
carries a variable $U_e$ which is a function of incident edge
variables and source variables.


\begin{definition}
  Given a network $\graph=(\nodes,\edges)$, with sessions $\sessions$, source
  locations $a$ and sink demands $b$, and a subset of pseudo-entropy
  functions $\Delta\subset\Real^{2^{|\sessions|+|\edges|}}$ on
  pseudo-variables $Y_\sessions,U_\edges$, let $\R(\Delta) =
  \{(R_s,s\in\sessions)\}$ be the set of source rate tuples for which
  there exists a $g\in\Delta$ satisfying
  \begin{align}
    \indep_{s\in\sessions} & \; Y_s \tag{$\C_1$} \\
    g\left(U_e \mid \{Y_s : {\eht{a(s)}{e}}\}, \{U_f: \eht{f}{e}\}
    \right) & = 0,   e\in\edges \tag{$\C_2$}\\
    g\left(Y_s \mid U_e : {\eht{e}{u}}\right) & = 0,
    u\in b(s) \tag{$\C_3$}\\
    g(U_e) & \le C_e,  e\in\edges \tag{$\C_4$} \\
    g(Y_s) & \ge R_s, s\in\sessions \notag
  \end{align}
\end{definition}
It is known that $\R({\Gamma^*)}$ and $\R({\bar{\Gamma}^*)}$ are inner and
outer bounds for the set of achievable rates (i.e. rates for which
there exist network codes with arbitrarily small probability of
decoding error).

It is known that $\R(\Gamma)$ is an outer bound for the set of
achievable rates \cite{Yeu02}. Similarly, $\R(\GammaIn)$ is an outer
bound for the set of rates achievable with linear network codes
\cite{ChaGra08}. Clearly $\R(\GammaIn) \subset \R(\Gamma)$.
The sum-rate bounds induced by $\R(\GammaIn)$ and  $\R(\Gamma)$
can in principle be computed using linear programming, since
they may be reformulated as
\begin{equation}\label{eq:LP}
  \begin{split}
    \max \sum_{s\in\sessions} g(Y_s) \quad
    \text{subject to} \\
    g \in \C_1 \cap \C_2 \cap \C_3 \cap \C_4 \cap \Delta
  \end{split}
\end{equation}
where $\Delta$ is either $\Gamma$ or $\GammaIn$, and $\C_1, \C_2,
\C_3, \C_4$ are the subsets of pseudo-entropy functions satisfying the
so-labeled constraints above. Clearly the constraint set $\C_1 \cap
\C_2 \cap \C_3 \cap \C_4 \cap \Delta$ is linear.

One practical difficulty with computation of \eqref{eq:LP} is the
number of variables and the number of constraints due to $\Gamma$ (or
$\GammaIn$), both of which increase exponentially with $|\edges|$. The
aim of this paper is to find a simpler outer bound. One approach is to
use the functional dependence structure induced by the network
topology to eliminate variables or constraints from $\Gamma$
\cite{GriGra08}.  Here we will take a related approach, that directly
delivers an easily computable bound.

\section{Functional Dependence Graphs}\label{sec:FDG}
\begin{definition}[Functional Dependence Graph]\label{def:FDG}
  Let $\X=\{X_1,\dots,X_N\}$ be a set of pseudo-variables with
  pseudo-entropy function $g$. A directed graph
  $\graph=(\nodes,\edges)$ with $|\nodes|=N$ is called a
  \emph{functional dependence graph} for $\X$ if and only if for all
  $i=1,2,\dots,N$
  \begin{equation}\label{eq:localdependence}
    g\left(X_i \mid \{ X_j : (j,i)\in\edges \}\right) = 0.
  \end{equation}
\end{definition}
With an identification of $X_i$ and node $i\in\nodes$, this Definition
requires that each pseudo-variable $X_i$ is a function (in the sense
of Definition \ref{def:function}) of the pseudo-variables associated
with its parent nodes. To this end, define
\begin{equation*}
  \parent(i) = \{j\in\nodes: (j,i)\in\edges\}.
\end{equation*}
Where it does not cause confusion, we will abuse notation and identify
pseudo-variables and nodes in the FDG, e.g. \eqref{eq:localdependence}
will be written
  $g\left(i \mid \parent(i)\right) = 0$.

  Definition \ref{def:FDG} is more general than the functional
  dependence graph of \cite[Chapter 2]{Kra98}. Firstly, in our
  definition there is no distinction between source and non-source
  random variables. The graph simply characterizes functional
  dependence between variables. In fact, our definition admits cyclic
  directed graphs, and there may be no nodes with in-degree zero
  (which are source nodes in \cite{Kra98}). We also do not require
  independence between sources (when they exist), which is implied by
  the acyclic constraint in \cite{Kra98}. Our definition of an FDG
  admits pseduo-entropy functions $g$ with \emph{additional}
  functional dependence relationships that are not represented by the
  graph. It only specifies a certain set of conditional
  pseudo-entropies which must be zero.  Finally, our definition holds
  for a wide class of objects, namely pseudo-variables, rather than
  just random variables.

Clearly a functional dependence graph in the sense of
\cite{Kra98} satisfies the conditions of Definition \ref{def:FDG}, but
the converse is not true. Henceforth when we refer to a functional
dependence graph (FDG), we mean in the sense of Definition
\ref{def:FDG}. Furthermore, an FDG is \emph{acyclic} if $\graph$ has
no directed cycles. A graph will be called \emph{cyclic} if every node
is a member of a directed cycle.\footnote{In this paper we do not
  consider graphs that are neither cyclic or acyclic.}


Definition \ref{def:FDG} specifies an FDG in terms of local dependence
structure. Given such local dependence constraints, it is of great
interest to determine all implied functional dependence relations.
In other words, we wish to find all sets $\A$ and $\B$ such that
$g(\A \B)=g(\A)$.

\begin{definition}\label{def:fd}
  For disjoint sets $\A,\B\subset\nodes$ we say $\A$ determines $\B$
  in the directed graph $\graph=(\nodes,\edges)$, denoted
  $\A\rightarrow\B$, if there are no elements of $\B$ remaining after
  the following procedure:

  Remove all edges outgoing from nodes in $\A$ and
  subsequently remove all nodes and edges with no incoming edges and
  nodes respectively.
\end{definition}
For a given set $\A$, let $\phi(\A)\subseteq\nodes$ be the set of nodes
deleted by the procedure of Definition~\ref{def:fd}. Clearly
$\phi(\A)$ is the largest set of nodes with $\A\rightarrow\phi(\A)$.

\begin{lemma}[Grandparent lemma]\label{lem:grandparent}
  Let $\graph=(\nodes,\edges)$ be a functional dependence graph for a
  polymatroidal pseudo-entropy function $g\in\Gamma$. For any
  $j\in\nodes$ with $i\in\parent(j)\neq\emptyset$
\begin{equation*}
  g\left(j\mid \parent(i), \parent(j) \setminus i \right) = 0.
\end{equation*}
\end{lemma}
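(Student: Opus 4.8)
The plan is to derive the claim entirely from the polymatroid axioms \eqref{poly:zero}--\eqref{poly:submod}, since $g$ is only assumed to lie in $\Gamma$ and no probabilistic reasoning is available. The guiding intuition is a substitution: $j$ is a function of its parent set $\parent(j)$, and the distinguished parent $i$ is itself a function of its own parents $\parent(i)$, so knowing $\parent(i)$ together with the remaining parents $\parent(j)\setminus i$ ought to determine $j$. The work is to realize this substitution as a short chain of submodularity and monotonicity inequalities rather than as an informal ``plug in'' argument.

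First I would record the two local constraints granted by Definition \ref{def:FDG}: $g(j \mid \parent(j)) = 0$ and $g(i \mid \parent(i)) = 0$. The key device, which I would establish once and reuse, is that conditioning on extra pseudo-variables can only decrease a conditional pseudo-entropy: for a polymatroid, submodularity \eqref{poly:submod} applied to the pair of sets $\{X\}\cup\A$ and $\A\B$ (whose union is $\{X\}\cup\A\B$ and whose intersection is $\A$) rearranges to $g(X \mid \A\B) \le g(X \mid \A)$, while non-negativity of $g(X \mid \A\B)$ comes from \eqref{poly:nondec}. Thus any conditional pseudo-entropy that is already zero stays zero when more variables are adjoined to the condition.

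Applying this device to $g(j \mid \parent(j)) = 0$, adjoining the grandparents $\parent(i)$, gives $g(j \mid \parent(i), \parent(j)) = 0$; since $\parent(j) = \{i\} \cup (\parent(j)\setminus i)$, this reads $g(j, i, \parent(i), \parent(j)\setminus i) = g(i, \parent(i), \parent(j)\setminus i)$. Applying the same device to $g(i \mid \parent(i)) = 0$, adjoining $\parent(j)\setminus i$, gives $g(i \mid \parent(i), \parent(j)\setminus i) = 0$, i.e. $g(i, \parent(i), \parent(j)\setminus i) = g(\parent(i), \parent(j)\setminus i)$. Chaining these two equalities yields $g(j, i, \parent(i), \parent(j)\setminus i) = g(\parent(i), \parent(j)\setminus i)$. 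Finally I would squeeze $g(j, \parent(i), \parent(j)\setminus i)$ between this quantity from above (dropping $i$ can only decrease the value, by \eqref{poly:nondec}) and $g(\parent(i), \parent(j)\setminus i)$ from below (adding $j$ can only increase it); the two bounds coincide, forcing equality, which is precisely $g(j \mid \parent(i), \parent(j)\setminus i) = 0$.

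I expect the main difficulty to be bookkeeping rather than anything conceptual. The substantive insight is that \emph{both} local constraints must first be strengthened by conditioning on additional variables before they can be chained; neither strengthening alone suffices. The delicate points are then choosing the two set pairs in each submodularity step so that their unions and intersections align with the conditional pseudo-entropies being compared, and applying the final monotonicity squeeze in the correct directions. No step appeals to anything beyond \eqref{poly:zero}--\eqref{poly:submod}, so the polymatroidal hypothesis $g\in\Gamma$ is already enough and the stronger entropic assumption is never needed.
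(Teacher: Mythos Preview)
Your proof is correct and follows essentially the same approach as the paper: both arguments strengthen the two local constraints $g(j\mid\parent(j))=0$ and $g(i\mid\parent(i))=0$ by adjoining extra conditioning via submodularity, then combine them through chain-rule arithmetic. The only cosmetic difference is your final monotonicity squeeze to eliminate $i$, whereas the paper drops $i$ directly from both terms by invoking the (further-conditioned) constraint $g(i\mid\parent(i),\ldots)=0$ twice.
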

\begin{proof}
  By hypothesis, $g(k\mid \parent(k))=0$ for any
  $k\in\nodes$. Furthermore, note that for any $g\in\Gamma$,
  conditioning cannot increase pseudo-entropy\footnote{This is a
    direct consequence of submodularity, \eqref{poly:submod}.} and
  hence $g(k\mid\parent(k),\A)=0$ for any $\A\subseteq\nodes$. Now
  using this property, and the chain rule
  \begin{align*}
    0 &= g(j\mid\parent(j)) \\
    &= g(j \mid \parent(j), \parent(i)) \\
    &= g(j, \parent(j), \parent(i)) - g(\parent(j), \parent(i)) \\
    &= g(j, \parent(j)\setminus i, \parent(i))
- g(\parent(j), \parent(i)) \\
  &= g(j, \parent(j)\setminus i, \parent(i)) -
  g(\parent(j)\setminus i, \parent(i))
\\
  &= g(j\mid \parent(i), \parent(j) \setminus i ).
  \end{align*}
\end{proof}
We emphasize that in the proof of Lemma~\ref{lem:grandparent} we have
only used the submodular property of polymatroids, together with the
hypothesized local dependence structure specified by the FDG.

Clearly the lemma is recursive in nature. For example, it is valid for
$g(j\mid \parent(j) \setminus i, \parent(i)\setminus k, \parent(k))=
0$ and so on. The implication of the lemma is that a pseudo-variable
$X_j$ in an FDG is a function of $X_\A$ for any $\A\subset\nodes$ with
$\A\rightarrow j$.

\begin{theorem}
Let $\graph$ be a functional dependence graph on the pseudo-variables
$\X$ with polymatroidal pseudo-entropy function $g$. Then for disjoint
subsets $\A,\B\subset\nodes$,
\begin{equation*}
\A \rightarrow \B \implies g(\B \mid \A) = 0.
\end{equation*}
\end{theorem}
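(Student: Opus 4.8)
The plan is to reduce the statement to showing $g(\phi(\A)\mid\A)=0$ and then to prove this by a deletion-ordered induction. Recall that $\A\rightarrow\B$ holds precisely when $\B\subseteq\phi(\A)$, the set of nodes deleted by the procedure of Definition~\ref{def:fd}. Once $g(\phi(\A)\mid\A)=0$ is established, the theorem follows from the non-decreasing property: since $\B\subseteq\phi(\A)$ we have $g(\A)\le g(\A\B)\le g(\A\,\phi(\A))=g(\A)+g(\phi(\A)\mid\A)=g(\A)$, whence $g(\A\B)=g(\A)$ and therefore $g(\B\mid\A)=0$.

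First I would extract a suitable ordering of $\phi(\A)$ from the deletion procedure. The nodes in $\phi(\A)$ are removed in rounds, and a node $v\notin\A$ is removed only once it has no remaining incoming edges; an edge $(p,v)$ can disappear only because $p\in\A$ (its outgoing edges are cut first) or because $p$ was removed in an earlier round (so its outgoing edges are subsequently cut as edges with no tail). Listing $\phi(\A)=\{v_1,\dots,v_m\}$ in order of removal therefore yields, for every $k$,
\begin{equation*}
  \parent(v_k)\subseteq\A\cup\{v_1,\dots,v_{k-1}\}.
\end{equation*}
The hard part will be making this containment airtight: one must argue that at the moment $v_k$ loses its final incoming edge, each of its parents has already been absorbed into $\A$ or into the previously deleted nodes. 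This is exactly the recursion hinted at after Lemma~\ref{lem:grandparent}, now organized as a single topological-style ordering.

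Granting the ordering, I would finish by induction on $k$ to show $g(\{v_1,\dots,v_k\}\mid\A)=0$, taking $k=m$ at the end. The chain rule~\eqref{eq:chainrule} gives
\begin{equation*}
  g(\{v_1,\dots,v_k\}\mid\A)=g(\{v_1,\dots,v_{k-1}\}\mid\A)+g(v_k\mid\A,v_1,\dots,v_{k-1}),
\end{equation*}
where the first term is zero by the inductive hypothesis. For the second term, the local dependence property gives $g(v_k\mid\parent(v_k))=0$, and since $\parent(v_k)\subseteq\A\cup\{v_1,\dots,v_{k-1}\}$, conditioning on these additional variables cannot increase pseudo-entropy (a consequence of submodularity, exactly as used in the proof of Lemma~\ref{lem:grandparent}); hence it too is zero. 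This completes the induction and the proof. As in the Grandparent lemma, only the FDG property and submodularity are invoked, so the result holds for every $g\in\Gamma$.
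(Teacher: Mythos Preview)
Your proposal is correct and is essentially a more carefully organized version of the paper's own proof. The paper simply observes that every node in $\B$ is reached from $\A$ along paths that do not bypass $\A$ and then says ``recursively invoking Lemma~\ref{lem:grandparent}'' finishes the argument; you make this recursion explicit by ordering $\phi(\A)$ by deletion time, establishing $\parent(v_k)\subseteq\A\cup\{v_1,\dots,v_{k-1}\}$, and inducting with the chain rule and submodularity---exactly the two ingredients the paper's sketch relies on.
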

\begin{proof}
  Let $\A\rightarrow\B$ in the FDG $\graph$. Then, by Definition
  \ref{def:fd} there must exist directed paths from some nodes in $\A$
  to a every node in $\B$, and there must not exist any directed path
  intersecting $\B$ that does not also intersect $\A$. Recursively
  invoking Lemma~\ref{lem:grandparent}, the theorem is proved.
\end{proof}

Definition~\ref{def:fd} describes an efficient graphical procedure to
find implied functional dependencies for pseudo-variables with local
dependence specified by a functional dependence graph $\graph$. It
captures the essence of the chain rule \eqref{eq:chainrule} for
pseudo-entropies and the fact that pseudo-entropy is non-increasing
with respect to conditioning (\ref{poly:submod}), which are the main
arguments necessary for manual proof of functional dependencies.

One application of Definition~\ref{def:fd} is to find a reduction of a
given set $\C$, i.e. to find a disjoint partition of $\C$ into $\A$
and $\B$ with $\A\rightarrow\B$, which implies
$g(\C)=g(\A\B)=g(\A)$. On the other hand, it also tells which sets are
\textit{irreducible}.
\begin{definition}[Irreducible set]
  A set of nodes $\B$ in a functional dependence graph is
  \emph{irreducible} if there is no $\A\subset\B$ with
  $\A\rightarrow\B$.
\end{definition}
Clearly, every singleton is irreducible. In addition, in an acyclic
FDG, irreducible sets are basic entropy sets in the sense of
\cite{GriGra08}.
In fact, irreducible sets generalize the idea of basic entropy sets to
the more general (and possibly cyclic) functional dependence graphs on
pseudo-variables.

\subsection{Acyclic Graphs}
In an acyclic graph, let $\An(\A)$ denote the set of ancestral nodes,
i.e. for every node $a\in\An(\A)$, there is a directed path from $a$
to some $b\in\A$.

Of particular interest are the maximal irreducible sets:
\begin{definition}
  An irreducible set $\A$ is \emph{maximal} in an acyclic FDG
  $\graph=(\nodes,\edges)$ if
  $\nodes\setminus\phi(\A)\setminus\An(\A) \defined (\nodes\setminus\phi(\A))\setminus\An(\A) =\emptyset$, and no proper
  subset of $\A$ has the same property.
\end{definition}
Note that for acyclic graphs, every subset of a maximal irreducible
set is irreducible. Conversely, every irreducible set is a subset of
some maximal irreducible set \cite{GriGra08}. Irreducible sets can be
augmented in the following way.
\begin{lemma}[Augmentation]\label{lem:augment}
  Let $A\subset\nodes$ in an acyclic FDG $\graph=(\nodes,\edges)$. Let
  $\B=\nodes\setminus\phi(\A)\setminus\An(\A)$.  Then $\A\cup\{b\}$ is
  irreducible for every $b\in \B$.
\end{lemma}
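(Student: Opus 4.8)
The plan is to argue by contradiction. Throughout I take $\A$ to be irreducible, as the surrounding discussion intends, and I assume some augmented set $\A\cup\{b\}$ with $b\in\B$ fails to be irreducible. If $b\in\A$ then $\A\cup\{b\}=\A$ and there is nothing to prove, so I may assume $b\notin\A$. Before splitting into cases I would record two properties of the operator $\phi$, both reading off Definition~\ref{def:fd} and the fact that $\phi(\C)$ is the largest set with $\C\rightarrow\phi(\C)$. The first is \emph{monotonicity}: $\C\subseteq\A$ implies $\phi(\C)\subseteq\phi(\A)$, since enlarging the determining set can only determine more nodes. The second, which does the real work, is a \emph{locality} property: whether a node $v$ lies in $\phi(\C)$ is decided entirely by $\C\cap\An(\{v\})$ and the ancestor subgraph of $v$, because the deletion procedure settling the fate of $v$ only ever touches ancestors of $v$.

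With these in hand I would unfold reducibility: if $\A\cup\{b\}$ were reducible there is a proper subset $\C\subsetneq\A\cup\{b\}$ with $\C\rightarrow(\A\cup\{b\})\setminus\C$, equivalently $\phi(\C)\supseteq(\A\cup\{b\})\setminus\C$. I split on whether $b\in\C$. If $b\notin\C$ then $\C\subseteq\A$ while $b\in\phi(\C)$; monotonicity then gives $b\in\phi(\C)\subseteq\phi(\A)$, contradicting $b\in\B\subseteq\nodes\setminus\phi(\A)$.

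The substantive case is $b\in\C$. Put $\C'=\C\setminus\{b\}$; properness of $\C$ forces $\C'\subsetneq\A$, and the reduction reads $\phi(\C'\cup\{b\})\supseteq\A\setminus\C'$. Here I would bring in the defining feature of $\B$ that $b\notin\An(\A)$, so that $b$ is an ancestor of no node of $\A$, i.e. $b\notin\An(\{v\})$ for every $v\in\A$. By the locality property the sets $\C'\cup\{b\}$ and $\C'$ agree on $\An(\{v\})$, hence $v\in\phi(\C'\cup\{b\})$ if and only if $v\in\phi(\C')$. Applying this to each $v\in\A\setminus\C'$ yields $\phi(\C')\supseteq\A\setminus\C'$, i.e. $\C'\rightarrow\A\setminus\C'$ with $\C'\subsetneq\A$, contradicting the irreducibility of $\A$. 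As both cases are impossible, $\A\cup\{b\}$ is irreducible.

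I expect the main obstacle to be making the locality property precise, i.e. justifying that adjoining a non-ancestor $b$ never creates a new determination of a node of $\A$. The clean justification is that determination propagates strictly forward along directed paths --- this is exactly the recursive content of Lemma~\ref{lem:grandparent}, which rewrites $g(j\mid\cdot)$ by pushing back through parents --- so the node-deletion procedure deciding $v$ removes only edges and nodes among the ancestors of $v$. Consequently the status of each $v\in\A$ is a function of its ancestor subgraph alone, on which $b$ has no effect. Pinning down this forward-propagation claim at the level of the combinatorial procedure, rather than by symbol-pushing, is the one step that needs genuine care.
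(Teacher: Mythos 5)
The paper states Lemma~\ref{lem:augment} with no proof at all --- the text passes directly from the lemma to the recursive algorithm, citing only a ``similar process of augmentation'' in \cite{GriGra08} --- so there is no in-paper argument to compare against; your proposal supplies the missing proof, and it is correct. Your case split on whether the putative reducing set $\C\subsetneq\A\cup\{b\}$ contains $b$ is the right decomposition, and both of your supporting properties of $\phi$ hold. Monotonicity follows from the fixed-point reading of Definition~\ref{def:fd}: $v\in\phi(\C)$ exactly when every parent of $v$ lies in $\C\cup\phi(\C)$, and this least fixed point grows with $\C$. The locality property you flag as the delicate step is rigorous for precisely the reason you sketch: in an acyclic graph that recursion is well-founded, and induction along a topological order shows membership of $v$ in $\phi(\C)$ is a function of $\C\cap\An(\{v\})$ alone, since the recursion never consults a node outside $\An(\{v\})$; hence adjoining $b\notin\An(\A)$ cannot put any $v\in\A$ into $\phi(\C'\cup\{b\})\setminus\phi(\C')$. (Your appeal to Lemma~\ref{lem:grandparent} here is only an analogy --- that lemma concerns pseudo-entropies, while locality is purely combinatorial --- but you correctly note the argument must live at the level of the deletion procedure, and the induction above is all that is needed.) Two further points in your favor: first, restoring the hypothesis that $\A$ be irreducible is not cosmetic --- as literally stated the lemma is false, since if $\C'\subsetneq\A$ reduces $\A$ then by monotonicity $\C'\cup\{b\}$ reduces $\A\cup\{b\}$, so reducibility propagates to every augmentation; the hypothesis you add is exactly what the surrounding sentence (``Irreducible sets can be augmented'') intends. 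Second, your argument quietly covers the degenerate witnesses ($\C=\A$ in the first case, $\C'=\emptyset$ in the second), which is worth checking given that the paper's definitions of $\A\rightarrow\B$ and of irreducibility are loose about disjointness and the empty set; both close correctly under the reading $\C\rightarrow(\A\cup\{b\})\setminus\C$ that you adopt.
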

This suggests a process of recursive augmentation to find all maximal
irreducible sets in an acyclic FDG (a similar process of augmentation
was used in \cite{GriGra08}).  Let $\graph$ be a topologically
sorted\footnote{I.e. Order nodes such that if there is a directed edge
  from node $i$ to $j$ then $i\prec j$ \cite[Proposition
  11.5]{Yeu02}.} acyclic functional dependence graph
$\graph=(\{0,1,2,\dots\},\edges)$. Its maximal irreducible sets can be
found recursively via $\textbf{AllMaxSetsA}(\graph,\{\})$ in Algorithm
\ref{alg:augment}. In fact, $\textbf{AllMaxSetsA}(\graph,\A)$
finds all maximal irreducible sets containing $\A$.

\begin{algorithm}
\caption{\textbf{AllMaxSetsA}($\graph,\A$)}\label{alg:augment}
  \begin{algorithmic}
    \REQUIRE $\graph=(\nodes,\edges), \A\subset\nodes$
    \STATE
    $\B\leftarrow\nodes\setminus\phi(\A)\setminus\An(\A)$
    \IF{$\B\neq\emptyset$}
    \STATE Output $\{\textbf{AllMaxSetsA}(\graph,\A\cup \{b\} ): b\in\B \}$
    \ELSE
    \STATE Output $\A$
    \ENDIF
  \end{algorithmic}
\end{algorithm}

\subsection{Cyclic Graphs}
In cyclic graphs, the notion of a maximal irreducible set is modified
as follows:
\begin{definition}
  An irreducible set $\A$ is \emph{maximal} in a cyclic FDG
  $\graph=(\nodes,\edges)$ if $\nodes\setminus\phi(\A)=\emptyset$, and
  no proper subset of $\A$ has the same property.
\end{definition}
For cyclic graphs, every subset of a maximal irreducible set is
irreducible. In contrast to acyclic graphs, the converse is not
true. In fact there can be irreducible sets that are not maximal, and
are not subsets of any maximal irreducible set.
It is easy to show that
\begin{lemma}\label{lem:equal}
  All maximal irreducible sets have the same pseudo-entropy.
\end{lemma}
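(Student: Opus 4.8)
The plan is to show that \emph{every} maximal irreducible set in a cyclic FDG has pseudo-entropy exactly $g(\nodes)$; the lemma then follows because all these sets share this single common value.

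First I would unpack the definition of a maximal irreducible set $\A$ in the cyclic case. It requires $\nodes\setminus\phi(\A)=\emptyset$, i.e. $\phi(\A)=\nodes$, so in particular every node outside $\A$ is deleted by the procedure of Definition~\ref{def:fd}; that is, $\nodes\setminus\A\subseteq\phi(\A)$. Since $\A$ and $\nodes\setminus\A$ are disjoint and no element of $\nodes\setminus\A$ survives the deletion procedure, this is precisely the determination relation $\A\rightarrow\nodes\setminus\A$.

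Next I would invoke the theorem relating local to global dependence ($\A\rightarrow\B\implies g(\B\mid\A)=0$) with $\B=\nodes\setminus\A$, yielding $g(\nodes\setminus\A\mid\A)=0$. Expanding the conditional pseudo-entropy via the chain rule \eqref{eq:chainrule} gives $g(\nodes\setminus\A\mid\A)=g(\nodes)-g(\A)$, hence $g(\A)=g(\nodes)$. Because this argument applies verbatim to any maximal irreducible set, each such set has pseudo-entropy equal to the common value $g(\nodes)$, so they all coincide.

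I do not expect any serious obstacle: the only point warranting care is the bookkeeping in the first step, namely confirming that the condition $\nodes\setminus\phi(\A)=\emptyset$ really supplies a genuine relation $\A\rightarrow\nodes\setminus\A$ compatible with the disjointness required in Definition~\ref{def:fd} (even though $\A$ itself may be swept into $\phi(\A)$ when the deletion traverses a cycle, the relation we apply is to $\nodes\setminus\A$, which is disjoint from $\A$ by construction). Once that is settled, the result is a one-line consequence of the theorem and the chain rule, consistent with the paper's remark that it ``is easy to show.''
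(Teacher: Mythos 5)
Your proof is correct, and it is evidently the intended argument: the paper itself supplies no proof (the lemma is introduced with ``It is easy to show that''), and your route---the condition $\nodes\setminus\phi(\A)=\emptyset$ yields the disjoint determination relation $\A\rightarrow\nodes\setminus\A$, whence the theorem on implied dependencies and the chain rule \eqref{eq:chainrule} give $g(\A)=g(\nodes)$ for every maximal irreducible set---is the natural one-line derivation, which in fact never uses minimality of $\A$. Your cautionary remark is also well placed and correctly resolved: in a cyclic graph $\A$ itself may be absorbed into $\phi(\A)$, but the determination relation you invoke is applied to $\nodes\setminus\A$, which is disjoint from $\A$ as Definition~\ref{def:fd} requires.
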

This fact will be used in development of our capacity bound for
network coding in Section \ref{sec:bound} below.  We are interested in
finding every maximal irreducible set for cyclic graphs. This may be
accomplished recursively via $\textbf{AllMaxSetsC}(\graph,\{\})$ in
Algorithm \ref{alg:AMS}. Note that in contrast to Algorithm
\ref{alg:augment}, $\textbf{AllMaxSetsC}(\graph,\A)$ finds all maximal
irreducible sets that \emph{do not} contain any node in $\A$.
\begin{algorithm}
  \caption{\textbf{AllMaxSetsC}($\graph,\A$)}\label{alg:AMS}
  \begin{algorithmic}
    \REQUIRE $\graph=(\nodes,\edges), \A\subset\nodes$
     \IF{$v\not\in\phi\left(\A^c\setminus\{v\}\right), \forall v\in \A^c$}
    \STATE Output $\A^c$
    \ELSE
                \FORALL{$v\in\A^c$}
                \IF{$v\in\phi\left(\A^c\setminus\{v\}\right)$}
                \STATE Output $\textbf{AllMaxSetsC}(\graph,\A\cup\{v\})$
                    \ENDIF
             \ENDFOR
     \ENDIF
  \end{algorithmic}
\end{algorithm}


\begin{example}[Butterfly network]
  Figure \ref{butterfly} shows the well-known butterfly network and
  Figure~\ref{mul} shows the corresponding functional dependence
  graph. Nodes are labeled with node numbers and pseudo-variables
  (The sources variables are $Y_1$ and $Y_2$. The $U_i$ are the edge
  variables, carried on links with capacity $C_i$).  Edges in the FDG
  represent the functional dependency due to encoding and decoding
  requirements.
\begin{figure}[htbp]
\centering
  \includegraphics[scale=0.4]{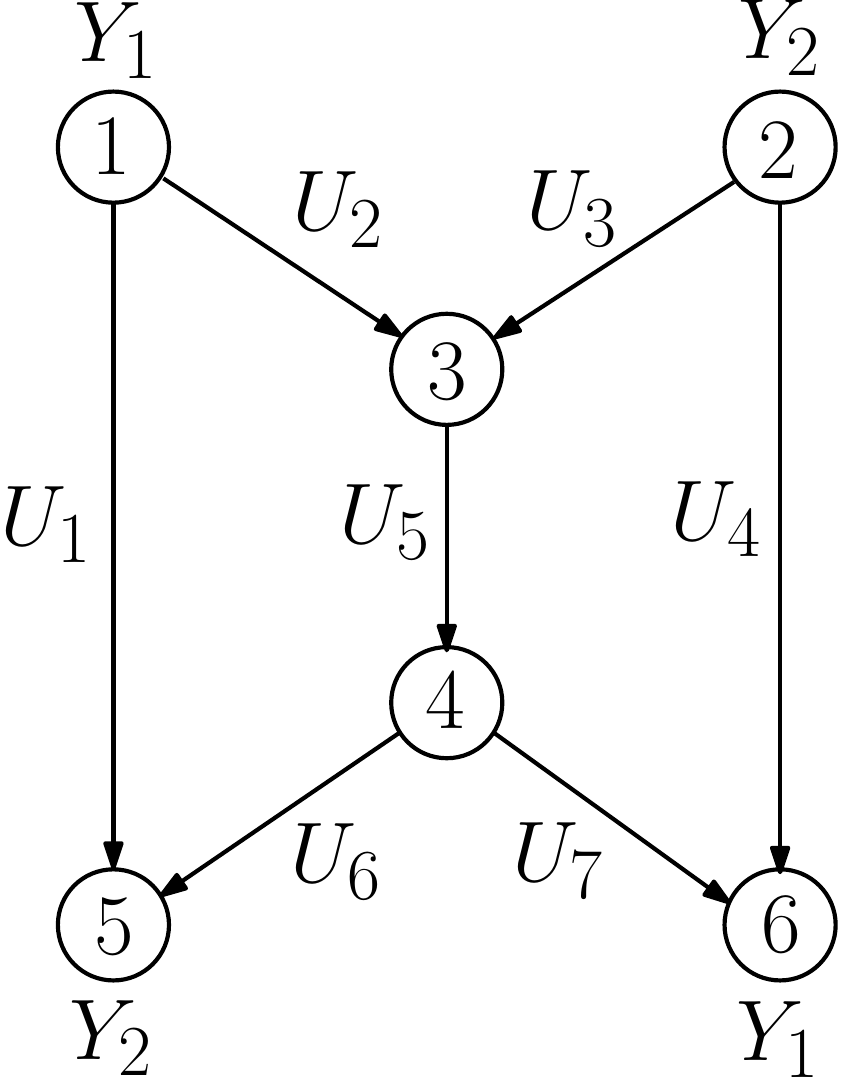}
  \caption{The butterfly network.}\label{butterfly}
\end{figure}
\begin{figure}[htbp]
\centering
  \includegraphics[scale=0.4]{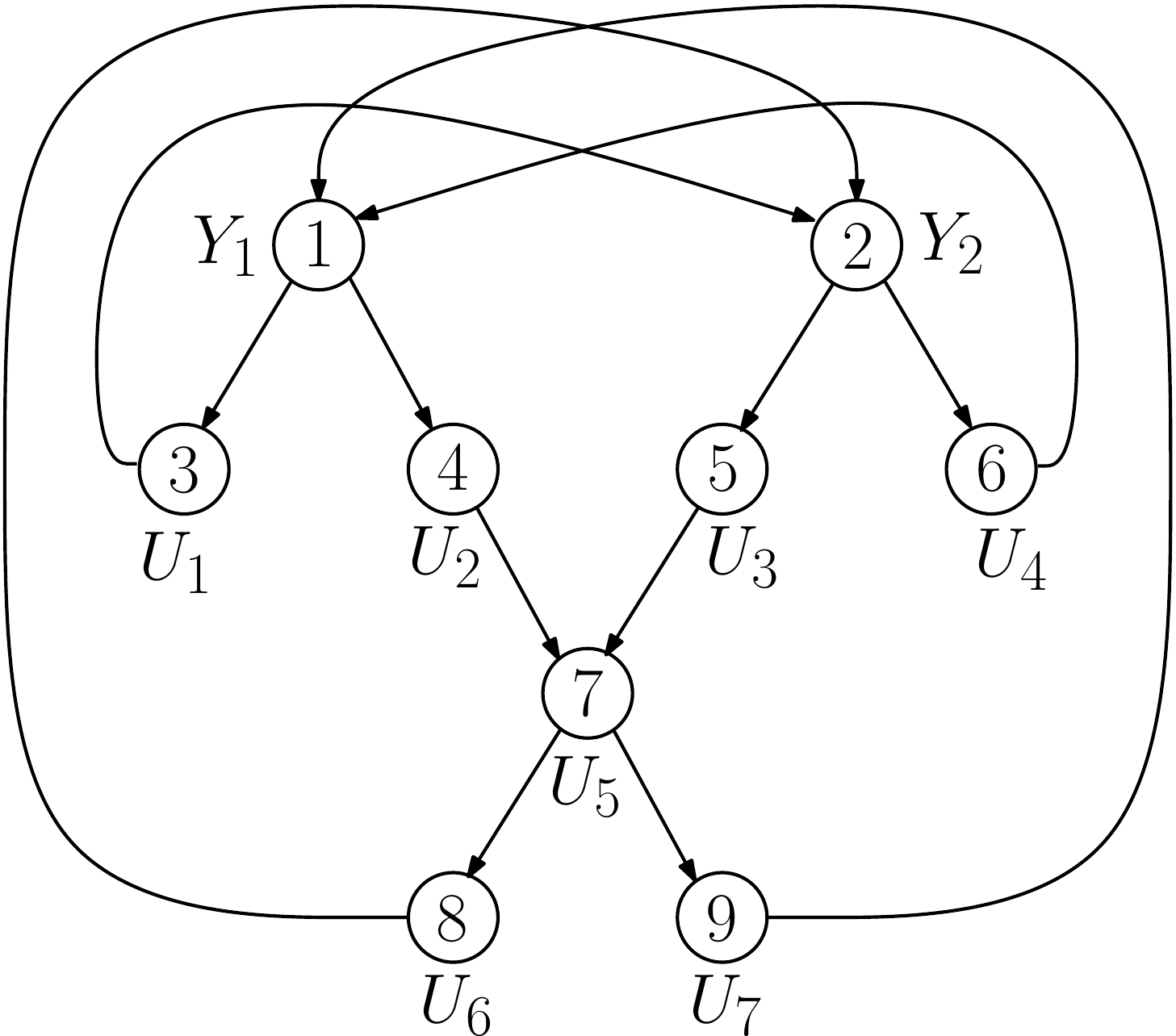}
  \caption{FDG of the butterfly network.}\label{mul}
\end{figure}

The maximal irreducible sets of the cyclic FDG shown in Figure
\ref{mul} are
\begin{multline*} \{1,2\}, \{1,5\}, \{1,7\}, \{1,8\},
\{2,4\}, \{2,7\}, \{2,9\},
\{3,4,5\}, \\
\{3,4,8\}, \{3,7\}, \{3,8,9\}, \{4,5,6\}, \{5,6,9\}, \{6,7\},
\{6,8,9\}.
\end{multline*}
\end{example}

\section{Functional Dependence Bound}\label{sec:bound}
We now give an easily computable outer bound for the total capacity of
a network coding system.
\begin{theorem}[Functional Dependence Bound]\label{thm:mainresult}
  Let $\C_1,\C_2,\C_3,\C_4$ be given network coding constraint sets.
  Let $\graph=(\nodes,\edges)$ be a functional dependence
  graph\footnote{This FDG will be cyclic due to the sink demands
    $\C_3$} on the (source and edge) pseudo-variables
  $Y_\sessions,U_\edges$ with pseudo-entropy function
  $g\in\C_1\cap\C_2\cap\C_3\cap\C_4\cap\Gamma$. Let $\B_M$ be the
  collection of all maximal irreducible sets not containing source
  variables.  Then
\begin{equation*}
  \sum_{s \in \sessions} g(Y_{s}) \leq \min_{\B\in\B_M} \sum_{e :
    U_{e} \in \B} C_{e}.
\end{equation*}
\end{theorem}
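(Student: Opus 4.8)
The plan is to fix an arbitrary maximal irreducible set $\B\in\B_M$, bound $\sum_{s}g(Y_s)$ above by $\sum_{e:U_e\in\B}C_e$, and then take the minimum over $\B\in\B_M$, since the bound will hold for each such set. The argument chains together three ingredients: the source independence constraint $\C_1$, the functional dependence theorem (which converts the graphical relation $\A\rightarrow\B$ into the entropic identity $g(\B\mid\A)=0$), and the capacity constraint $\C_4$. First I would use $\C_1$: because the sources are mutually independent, iterating Definition~\ref{def:independent} gives $g(Y_\sessions)=\sum_{s\in\sessions}g(Y_s)$, so it suffices to bound the single joint quantity $g(Y_\sessions)$.

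Next I would exploit maximality. Since $\B$ is a maximal irreducible set in the (cyclic) FDG, the defining condition is $\nodes\setminus\phi(\B)=\emptyset$, which reads as the determination relation $\B\rightarrow\nodes\setminus\B$. Applying the functional dependence theorem then yields $g\left(\nodes\setminus\B\mid\B\right)=0$, hence $g(\B)=g(\nodes)$; that is, $g(\B)$ equals the joint pseudo-entropy of all source and edge variables. Monotonicity~\eqref{poly:nondec} then gives $g(Y_\sessions)\le g(\nodes)=g(\B)$.

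Finally, since $\B\in\B_M$ contains no source variables, every element of $\B$ is an edge variable $U_e$. Repeated application of submodularity~\eqref{poly:submod}, which gives subadditivity over disjoint singletons, bounds $g(\B)\le\sum_{e:U_e\in\B}g(U_e)$, and the capacity constraint $\C_4$ gives $g(U_e)\le C_e$. Combining the three displays yields $\sum_{s\in\sessions}g(Y_s)\le\sum_{e:U_e\in\B}C_e$ for the chosen $\B$, and minimizing over $\B\in\B_M$ completes the proof.

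The main obstacle I anticipate is the second step: correctly interpreting the cyclic maximality condition $\phi(\B)=\nodes$ as the determination relation $\B\rightarrow\nodes\setminus\B$, and then invoking the functional dependence theorem in the right direction so as to obtain the equality $g(\B)=g(\nodes)$ rather than a merely one-sided inequality. Once that identity is in hand, the remainder is a routine chain of polymatroid (monotonicity and subadditivity) bounds together with the constraints $\C_1$ and $\C_4$.
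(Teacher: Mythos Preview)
Your proposal is correct and follows essentially the same approach as the paper. The only minor difference is in the middle step: the paper invokes Lemma~\ref{lem:equal} to assert directly that $g(Y_\sessions)=g(\B)$ (using that $Y_\sessions$ is itself a maximal irreducible set, since by $\C_2$ the sources determine every edge variable), whereas you obtain the one-sided inequality $g(Y_\sessions)\le g(\nodes)=g(\B)$ via monotonicity; either suffices for the bound, and your version is in effect unpacking the content of Lemma~\ref{lem:equal}.
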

\vspace{2mm}
\begin{proof}
Let $\B\in\B_M$, then
\begin{align*}
\sum_{s \in \sessions} g(Y_{s}) &{=} g(Y_\sessions) & \C_1\\
&{=} g(U_{e}: U_{e} \in \B) & \text{Lemma
\ref{lem:equal}}\\
&{\leq} \sum_{U_{e} \in \B} g(U_{e}) &
\text{Subadditivity of $g\in\Gamma$}\\
&{\leq} \sum_{e : U_{e} \in {\B}}
C_{e}. & \C_4
\end{align*}
\end{proof}
Maximal irreducible sets which do not contain source variables
are ``information blockers" from sources to corresponding sinks.
They can be interpreted as information theoretic cuts in in the network.
Note that an improved  bound can in principle be obtained by using
additional properties of $\Gamma$ (rather than just
subadditivity). Similarly, bounds for linear
network codes could be obtained by using $\GammaIn$.

\begin{corollary}
  For single source multicast networks, Theorem \ref{thm:mainresult}
  becomes the max-flow bound \cite[Theorem 11.3]{Yeu02} and hence
  is tight.
\end{corollary}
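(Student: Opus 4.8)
The plan is to show that, when there is a single source variable $Y_1$ with sink set $b(1)$, the collection $\B_M$ of maximal irreducible sets not containing the source coincides with the family of (inclusion-minimal) edge sets that separate the source node $a(1)$ from one of its sinks. The bound of Theorem~\ref{thm:mainresult} then equals the minimum cut capacity over all sinks, which by the max-flow/min-cut theorem equals $\min_{t\in b(1)}\mathrm{maxflow}(a(1),t)$, i.e. the max-flow bound \cite[Theorem 11.3]{Yeu02}. Tightness is then immediate, since that bound is achievable for single-source multicast \cite{LiYeu03}. So only the identity of the two bounds requires work, and I would prove it by establishing the claimed correspondence in both directions.

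First I would record a fact special to the single-source case: every edge variable $U_e$ is a deterministic function of $Y_1$ (by induction along the acyclic network), so $\{Y_1\}\rightarrow\nodes$ in the FDG. Consequently, for any source-free set $\B$ of edge variables, $\phi(\B)=\nodes$ if and only if $\B\rightarrow Y_1$; determining every node is the same as determining the source. Hence $\B_M$ is precisely the collection of inclusion-minimal source-free edge-variable sets that determine $Y_1$, and since all $C_e>0$, passing to a subset never increases $\sum_{e:U_e\in\B}C_e$.

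For the forward inclusion, fix a sink $t\in b(1)$ and a minimum cut $K$ separating $a(1)$ from $t$. Because every directed source-to-$t$ path meets $K$, each edge entering $t$ is determined by $\{U_e:e\in K\}$ (tracing its inputs back toward the source must cross $K$, else one exhibits a source-to-$t$ path avoiding $K$); by constraint $\C_3$ these incoming edges determine $Y_1$, so $\B_K=\{U_e:e\in K\}$ determines $Y_1$ and therefore all nodes. Thus $\B_K$ contains a member of $\B_M$ of capacity at most $\sum_{e\in K}C_e$, giving $\min_{\B\in\B_M}\sum_{e:U_e\in\B}C_e\le \min_{t\in b(1)}\mathrm{maxflow}(a(1),t)$.

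For the reverse inclusion, take $\B\in\B_M$ and set $K=\{e:U_e\in\B\}$; I must show $K$ contains a cut from $a(1)$ to some sink. Suppose not: then for some sink $t$ there is a directed source-to-$t$ path $P$ avoiding $K$. The first edge of $P$ leaves $a(1)$, so the FDG contains an edge $Y_1\rightarrow U_{e_1}$; consecutive edges of $P$ induce edges $U_{e_i}\rightarrow U_{e_{i+1}}$; and the last edge enters $t$, inducing the $\C_3$ edge $U_{e_{\mathrm{last}}}\rightarrow Y_1$. These nodes form a directed cycle through $Y_1$, none of whose members lie in $\B$ (the path avoids $K$ and $\B$ is source-free), so no edge internal to the cycle is deleted by the procedure of Definition~\ref{def:fd}. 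Every cycle node therefore retains an incoming edge throughout the peeling and is never removed; in particular $Y_1\notin\phi(\B)$, contradicting $\phi(\B)=\nodes$. Hence $\sum_{e\in K}C_e\ge\min_{t\in b(1)}\mathrm{maxflow}(a(1),t)$, and combining the two inclusions yields equality. The main obstacle is exactly this reverse step: converting the combinatorial deletion procedure on the \emph{cyclic} FDG into a genuine cut in the original acyclic network, the subtlety being that the cycles created by the sink-decoding edges of $\C_3$ are what keep the source alive whenever a $K$-avoiding path survives.
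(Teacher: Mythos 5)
The paper states this corollary without any proof, so there is nothing to compare line by line; your architecture --- identify $\min_{\B\in\B_M}\sum_{e:U_e\in\B}C_e$ with $\min_{t\in b(1)}$ of the min-cut capacity, then invoke max-flow/min-cut and achievability for single-source multicast --- is surely the intended argument, and your key device in the reverse direction (a directed cycle disjoint from $\B$ is never deleted by the procedure of Definition~\ref{def:fd}, since the first cycle node to die would need its cycle predecessor already dead) is correct and is exactly the right tool. If $|b(1)|=1$ your proof is essentially complete.

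However, for genuine multicast ($|b(1)|\ge 2$, the case the corollary is about) there is a real gap in how you use $\C_3$. The constraint set $\C_3$ contains one decoding constraint per sink, but Definition~\ref{def:FDG} assigns each node a \emph{single} parent set, and the deletion rule of Definition~\ref{def:fd} removes a node only when \emph{all} of its incoming edges are gone. In the single FDG containing all decoding edges, the parents of $Y_1$ are the union of in-edge variables over all sinks, so $Y_1$ is deleted only when every sink's inputs are determined (AND-semantics), whereas your forward step ``by constraint $\C_3$ these incoming edges determine $Y_1$'' assumes one sink's inputs suffice (OR-semantics). The corollary is actually false for that all-edges FDG: take a source with two unit-capacity edges $e_1,e_2$ to two distinct sinks $t_1,t_2$ each demanding $Y_1$; the only source-free maximal irreducible set is $\{U_{e_1},U_{e_2}\}$ (either singleton leaves a surviving two-cycle through $Y_1$, by your own cycle argument), giving a bound of $2$, while the max-flow bound is $1$. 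The repair is to note that for each $t\in b(1)$ the graph in which $Y_1$'s parent set is just $\{U_e : \eht{e}{t}\}$ is itself a valid FDG for $g$, apply Theorem~\ref{thm:mainresult} once per sink, and minimize over these $|b(1)|$ bounds; your two inclusions then go through verbatim for each fixed $t$. This also explains your quantifier slip (``Suppose not: then for some sink $t$ there is a $K$-avoiding path''): as written your argument would show that every $\B\in\B_M$ cuts \emph{every} sink, which is too strong and contradicts the per-sink cut sets your forward direction produces; in the per-sink FDG only a $K$-avoiding path to that particular $t$ closes a cycle through $Y_1$, and the contradiction is then exactly the one you want.
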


\begin{example}[Butterfly network]
  The functional dependence bound for the butterfly network of Figure
  \ref{mul} is
  \begin{multline*}
    R_1+R_2 \leq \min
    \{C_3 + C_7,
    C_6+ C_7,
    C_3+ C_4 + C_5,\\
    C_3+ C_4 + C_8,
    C_3+ C_8 + C_9,
    C_4+ C_5 + C_6,\\
    C_5+ C_6 + C_9,
    C_6+ C_8 + C_9\}.
  \end{multline*}
\end{example}

To the best of our knowledge, Theorem \ref{thm:mainresult} is the
tightest bound expression for general multi-source multi-sink network
coding (apart from the computationally infeasible LP bound).  Other
bounds like the network sharing bound \cite{YanYan06} and bounds based
on information dominance \cite{HarKle06} use certain functional
dependencies as their main ingredient. In contrast, Theorem
\ref{thm:mainresult} uses all the functional dependencies due to
network encoding and decoding constraints.

\section{Conclusion}
Explicit characterization and computation of the multi-source network
coding capacity region requires determination of the set of all
entropic vectors $\Gamma^{*}$, which is known to be an extremely hard
problem. The best known outer bound can in principle be computed using
a linear programming approach. In practice this is infeasible due to
an exponential growth in the number of constraints and variables with
the network size.

We gave an abstract definition of a functional dependence graph, which
extends previous notions to accommodate not only cyclic graphs, but
more abstract notions of dependence. In particular we considered
polymatroidal pseudo-entropy functions, and demonstrated an efficient
and systematic method to find all functional dependencies implied by
the given local dependencies.

This led to our main result, which was a new, easily computable
outer bound, based on characterization of all functional
dependencies in networks.  We also show that the proposed bound is
tighter than some known bounds.

\bibliographystyle{ieeetr}

\end{document}